\documentclass[oribibl,envcountsame]{llncs}

\hyphenation{HIIT}
\hyphenation{HIITs}
\hyphenation{QIIT}
\hyphenation{QIITs}

\usepackage[
  hidelinks,
  bookmarks,
  bookmarksdepth=10,
  bookmarksopen,
  pdfpagemode=UseNone,
  unicode,
  pdftitle={Partiality, Revisited: The Partiality Monad as a Quotient Inductive-Inductive Type},
  pdfauthor={Thorsten Altenkirch, Nils Anders Danielsson and Nicolai Kraus}
  ]{hyperref}
\newcommand{\doi}[1]{doi:\href{https://doi.org/#1}{%
    \urlstyle{same}\nolinkurl{#1}}}

\usepackage{natbib}
\setcitestyle{numbers,square,semicolon,aysep={},yysep={,}}

\usepackage[utf8]{inputenc}
\usepackage[T1]{fontenc}
\DeclareUnicodeCharacter{03C9}{\ensuremath{\omega}}
\DeclareUnicodeCharacter{2291}{\ensuremath{\sqsubseteq}}

\usepackage{fancyhdr}
\usepackage{csquotes}
\usepackage{mathalfa}
\usepackage{amsmath}
\usepackage{amssymb}
\usepackage{amsfonts}
\usepackage{listings}
\usepackage{mathtools}
\usepackage{multicol}
\usepackage[llparenthesis,rrparenthesis]{stmaryrd}
\usepackage{xspace}
\usepackage{graphics}
\usepackage{braket}
\usepackage{float}
\usepackage{enumitem}
\usepackage{mathpartir}

\newcommand{\pbot}[1]{#1_{\bot}}
\newcommand{\order}{\sqsubseteq}
\newcommand{\LUB}{\bigsqcup}
\newcommand{\now}{\mathsf{now}}
\newcommand{\later}{\mathsf{later}}
\renewcommand{\Set}{\mathsf{Set}}
\newcommand{\Prop}{\mathsf{Prop}}
\newcommand{\Seq}{\mathsf{Seq}}
\newcommand{\ismon}{\mathsf{ismon}}
\newcommand{\DDname}{\mathrm D}
\newcommand{\DD}[1]{\DDname \! \left(#1\right)}
\newcommand{\quotient}[2]{{#1}/{#2}}
\newcommand{\DDinSubscript}[1]{\DDname}
\newcommand{\omegaCPOCategory}{\ensuremath{\mathsf{\omega\textsf{-}CPO}}}
\newcommand{\shift}{\mathit{shift}}
\newcommand{\unshift}{\mathit{unshift}}
\newcommand{\fstAlgmorph}{\mathsf{fst}}
\newcommand{\search}{\mathit{search}}
\newcommand{\caseta}{\left[\eta \mid \bot\right]}

\newcommand{\prd}[1]{\Pi_{#1} \, }
\newcommand{\lam}[1]{\lambda #1 .}
\newcommand{\sm}[1]{\Sigma_{#1} \, }
\newcommand{\Univ}{\mathcal{U}}
\newcommand{\N}{\mathbb{N}}
\newcommand{\Q}{\mathbb{Q}}
\newcommand{\zero}{\mathbf{0}}
\newcommand{\one}{\mathbf{1}}
\newcommand{\two}{\mathbf{2}}
\newcommand{\true}{1_{\two}}
\newcommand{\false}{0_{\two}}
\newcommand{\inl}{\mathsf{inl}}
\newcommand{\inr}{\mathsf{inr}}
\newcommand{\jdgeq}{\vcentcolon\equiv}
\DeclarePairedDelimiter\truncation{\lVert}{\rVert}
\newcommand{\trunc}[1]{\truncation*{#1}}

\begin{document}

\mainmatter

\title{Partiality, Revisited}
\subtitle{The Partiality Monad as a\\Quotient Inductive-Inductive Type}

\author{%
Thorsten Altenkirch\inst{1}\thanks{Supported by EPSRC grant EP/M016994/1 and by USAF, Airforce office for scientific research, award FA9550-16-1-0029.} 
\and Nils Anders Danielsson\inst{2}\thanks{Supported by a grant from the
  Swedish Research Council (621-2013-4879).}
\and Nicolai Kraus\inst{1}\thanks{Supported by EPSRC grant EP/M016994/1.}
}

\institute{University of Nottingham \and University of Gothenburg}

\maketitle

\fancyhf{}
\renewcommand{\headrulewidth}{0pt}
\fancyfoot[L]{\parbox{\textwidth}{\footnotesize{\textit{This is the
        authors' version of a paper published in FoSSaCS 2017. The
        final publication is available at Springer via}
      \url{http://dx.doi.org/10.1007/978-3-662-54458-7_31}\textit{.
        This version contains some insignificant improvements to the
        text that are not present in the published version.}}}}
\thispagestyle{fancy}

\begin{abstract}
  \citeauthor{capretta:2005}'s delay monad can be used to model
  partial computations, but it has the ``wrong'' notion of built-in
  equality, strong bisimilarity.
  An alternative is to quotient the delay monad by the ``right''
  notion of equality, weak bisimilarity.
  However, recent work by \citeauthor{Chapman2015} suggests that it is
  impossible to define a monad structure on the resulting construction
  in common forms of type theory without assuming (instances of) the
  axiom of countable choice.

  Using an idea from homotopy type theory---a higher
  inductive-inductive type---we construct a partiality monad without
  relying on countable choice.
  We prove that, in the presence of countable choice, our
  partiality monad is equivalent to the delay monad quotiented by weak
  bisimilarity.
  Furthermore we outline several applications.
\end{abstract}

\section{Introduction}

Computational effects can be modelled using monads, and in some
functional programming languages (notably Haskell) they are commonly
used as a program structuring device.
In the presence of dependent types one can both write and
reason about monadic programs.
From a type theorist's point of view, even a ``pure'' functional
language like Haskell is not really pure as it has built-in effects,
one of which is partiality: a function does not necessarily terminate.
It is thus natural to look for a partiality monad which makes it
possible to model partial computations and to reason about possibly
non-terminating programs.

\citeauthor{capretta:2005} modeled partial computations using a
coinductive construction that we call the \emph{delay
  monad}~\cite{capretta:2005}.
We use the notation $\DD A$ for \citeauthor{capretta:2005}'s type of
delayed computations over a type $A$.
$\DD A$ is coinductively generated by $\now : A \to \DD A$ and $\later : \DD A \to \DD A$.
Examples of elements of $\DD A$ include $\now(a)$ and
$\later(\later(\now(a)))$, as well as the infinitely delayed value
$\bot$, defined by the guarded equation $\bot = \later(\bot)$.
We can model recursive programs as Kleisli arrows $A \to \DD B$,
and we can construct fixpoints of ($\omega$-continuous) functions of
type $(A \to \DD B) \to (A \to \DD B)$, see
\citeauthor{benton-kennedy-varming}~\cite{benton-kennedy-varming}.

Unfortunately, \citeauthor{capretta:2005}'s delay monad is sometimes
too intensional.
It is often appropriate to treat two computations as equal if they
terminate with the same value, but the delay monad allows us to count
the number of ``steps'' ($\later$ constructors) used by a computation.

\citeauthor{capretta:2005} addressed this problem by defining a
relation that we call \emph{weak bisimilarity},
$\sim_{\DDinSubscript{A}}$, and that relates expressions that only
differ by a finite number of $\later$
constructors~\cite{capretta:2005}.
\citeauthor{capretta:2005} proved that the delay monad combined with
weak bisimilarity is a monad in the category of setoids.

A setoid is a pair consisting of a type and an equivalence relation
on that type.
Setoids are sometimes used to approximate quotient types in type
theories that lack support for quotients.
However, a major difference between setoids and quotient types is that
setoids do not provide a mechanism for information hiding.
Using the setoid approach basically boils down to introducing a new
relation together with the convention that all constructions have to
respect this relation.
A problem with this approach is that it can lead to something which
has informally become known as \emph{setoid hell}, in which one is
forced to prove that a number of constructions---even some that do not
depend on implementation details by, say, pattern matching on the
$\now$ and $\later$ constructors---preserve setoid relations.
This kind of problem does not afflict quotient types.

In a type theory with quotient types~\cite{hofmann:thesis}, one can
consider using the quotient $\quotient{\DD A}{\sim_{\DDinSubscript{A}}}$ as the type of
partial computations of type $A$.
This idea was discussed in a talk by Uustalu~\cite{tarmo-slides},
reporting on joint work with \citeauthor{capretta:2005} and the
first-named author of the current paper.
However, the idea does not seem to work as intended.
It is an open problem---and believed to be impossible---to show that
this construction actually constitutes a monad (in ``usual'' forms of
type theory).

With an additional assumption, \citeauthor{Chapman2015} have managed
to show that the partiality operator $\quotient{\DD
  -}{\sim_{\DDinSubscript{-}}}$ is a monad~\cite{Chapman2015}.
This additional assumption is known as \emph{countable choice}.
To express what this is, first note that the \emph{propositional truncation}, written $\trunc -$ and sometimes called ``squashing'', is an operation that turns a type into a proposition (a type with at most one element). 
We can see $\trunc A$ as the quotient of $A$ by the total relation.
Countable choice says that $\Pi$ and $\trunc -$ commute if the domain is the natural numbers,
in the sense that there is a function from
$\prd{n:\N} \trunc{P(n)}$
to $\trunc{\prd{n:\N} P(n)}$.
Even though this principle holds in some models, its status in type
theory is unclear: the principle is believed to be independent of
several variants of type theory.
Recently \citeauthor{coquand_mannaa_ruch} have shown that it cannot be
derived in a theory with propositional truncation and a single
univalent universe~\cite{coquand_mannaa_ruch},
speculating that the result might extend to a theory with a hierarchy
of universes.
Furthermore \citeauthor{richman2001constructive} argues that countable
choice should be avoided in constructive
reasoning~\cite{richman2001constructive}.
The main purpose of the present paper is to define a partiality monad
without making use of this principle.

The situation with the quotiented delay monad is similar to that of
one variant of the real numbers in constructive mathematics.
If the Cauchy reals are defined as a quotient, then it is impossible
to prove a specific form of the statement that every Cauchy sequence
of Cauchy reals has a limit using $\mathsf{IZF}_\mathrm{Ref}$, a
constructive set theory without countable choice~\cite{lubarsky}.
It is suspected that corresponding statements are also impossible to
prove in several variants of type theory.
An alternative solution was put forward in the context of homotopy
type theory~\cite{HoTTbook}.
In that approach, the reals are constructed inductively simultaneously
with a notion of closeness, and the quotienting is done directly in
the definition using a \emph{higher inductive-inductive type} (HIIT).

In 2015, Andrej Bauer and the first-named author of the current paper
suggested to use a similar approach to define a partiality monad
without using countable choice, an idea which was mentioned by
\citeauthor{Chapman2015}~\cite{Chapman2015}. Here, we show that this
is indeed possible.

We do not make use of the full power of HIITs, but restrict ourselves
to \emph{set-truncated} HIITs.
We call such types \emph{quotient
  inductive-inductive types}, \emph{QIITs}, following
\citeauthor{ttintt}~\cite{ttintt}.
Some of the theory of QIITs is developed in the forthcoming PhD thesis
of \citeauthor{gabe:thesis}~\cite{gabe:thesis}, see also
\citeauthor{altenkirch2016quotient}~\cite{altenkirch2016quotient}.
Although type theory extended with QIITs is still experimental and
currently lacks a solid foundation, QIITs are a significantly simpler
concept than full-blown HIITs.
It is conjectured that QIITs exist in some computational models of
type theory.

The type theory that we work in can be described as a fragment of the
theory considered in the standard textbook on homotopy type
theory~\cite{HoTTbook} (henceforth referred to as the HoTT book), and
is quite close to the theory considered by
\citeauthor{Chapman2015}~\cite{Chapman2015}.
Details are given in Sect.~\ref{sec:background}.
The construction of our partiality monad is given in
Sect.~\ref{sec:partiality}, together with its elimination principle
and some properties.
Furthermore we show that it gives us \emph{free $\omega$-cpos} in
a sense that we will make precise.
In Sect.~\ref{sec:countchoice-equiv} we show that, assuming countable
choice, our partiality monad is equivalent to (in bijective
correspondence to) the one of
\citeauthor{Chapman2015}~\cite{Chapman2015}.
We outline some applications of the partiality monad in Sect.~\ref{sec:applications}, and conclude with a short discussion in Sect.~\ref{sec:discussions}.

\subsubsection*{Agda Formalisation.}

The paper is accompanied by a formal
development~\cite{altenkirch-danielsson-kraus-partiality-agda} in
Agda.

At the time of writing, Agda does not directly support QIITs.
We have chosen to represent them by postulating their elimination principles together with the equalities they are supposed to satisfy.
In some cases (but not for the partiality monad) we have also made use of Agda's experimental rewriting feature~\cite{cockx-abel-rewriting} to turn postulated equalities into judgmental computation rules.

Note that there are differences between the Agda code and the
presentation in the text.
For one, the formalisation discusses various additional topics that
have been omitted in the paper for reasons of space, and is more
rigorous.
Furthermore, the paper defines the partiality monad's elimination
principle as a universal property.
In the formalisation the elimination principle is given as an
induction principle, but we also prove that this principle is
interderivable with the universal property.
Finally there are a number of small differences between the
formalisation and the text, and some results in the paper have not
been formalised at all, most notably the results about the reals in
Sect.~\ref{subsec:reals}.

\section{Background: Type Theory with Quotient Inductive-Inductive Types} \label{sec:background}

We work in intensional type theory of Martin-Löf style with all the
usual components (e.g.\@ $\Pi$, $\Sigma$, inductive types), including
the identity type (we use the notation $x = y$).
We assume that equality of functions is extensional, and that
(strong) bisimilarity implies equality for coinductive types.

\citeauthor{Chapman2015}~\cite{Chapman2015} assume the axiom of
\emph{uniqueness of identity proofs}, UIP, for all small types (types
in the lowest universe).
UIP holds for a type $A$ if, for any elements $x$, $y : A$, if we have
equalities $p$, $q : x = y$, then we have $p = q$.
Instead of postulating an axiom, we prefer to work in a more general
setting and restrict
ourselves to types with the corresponding property. This approach is
compatible with homotopy type theory.
In the language of homotopy type theory, we work with \emph{sets} or
\emph{$0$-truncated types}; a type is a set if and only if it
satisfies UIP.
When we write $A : \Set$, we mean that $A$ is a type (in some
universe) with the property of being a set; and when we write $B : A
\to \Set$, we mean that $B$ is a family of types such that each $B(a)$
is a set.

Similarly to $A : \Set$, we write $P : \Prop$ for a type $P$ with the
property that it is a proposition, i.e.\@ a $(-1)$-truncated type,
i.e.\@ a type with the property that any two of its elements are
equal.
A proposition is also a set.
The type of all propositions in a certain universe is closed under all
operations that are relevant to us, and the same applies to sets.

In addition to UIP, \citeauthor{Chapman2015}~\cite{Chapman2015} assume
\emph{propositional extensionality}---that logically equivalent
propositions are equal---for all small propositions.
This property is equivalent to the \emph{univalence
  axiom}~\cite{HoTTbook}, restricted to (small) propositions.
Just like \citeauthor{Chapman2015}, we only require propositional
extensionality (not full univalence) for our development, with the
exception that univalence is used to show that certain precategories
(in the sense of the HoTT book~\cite{HoTTbook}) are categories.
For an example of how propositional extensionality is used, see
Lemma~\ref{lem:order-lemma}.

\citeauthor{Chapman2015}~\cite{Chapman2015} also assume the existence
of quotient types in the style of
\citeauthor{hofmann:thesis}~\cite{hofmann:thesis}.
Given a set $A$ and a propositional relation $\sim$ on it, the (set-)
quotient $\quotient{A}{\sim}$ can in homotopy type theory be constructed as
a higher inductive type with three constructors~\cite{HoTTbook}:
\begin{alignat*}{2}
  &[-] && : A \to \quotient{A}{\sim} \\
  &[-]^= && :  \prd{a,b:A} a \sim b \to [ a ] = [ b ] \\
   &\mathsf{irr} && : \prd{x,y:\quotient{A}{\sim}}\prd{p,q : x = y}  p = q
\end{alignat*}
The last constructor $\mathsf{irr}$ ensures that any two parallel equalities are equal, that is, that $\quotient{A}{\sim}$ is set-truncated.
We call a higher inductive type with such a set-truncation constructor a \emph{quotient inductive type} (QIT). 

As noted above, \citeauthor{Chapman2015}~\cite{Chapman2015} use
countable choice, which we want to avoid.
Instead we make use of quotient \emph{inductive-inductive} types
(QIITs)~\cite{gabe:thesis,altenkirch2016quotient}.
From the point of view of homotopy type theory, these are
set-truncated higher inductive-inductive types (HIITs); some other
examples of HIITs can be found in the HoTT
book~\cite[Chap.\@~11]{HoTTbook}.
While it seems plausible that QIITs exist in some computational
models of type theory, this has yet to be determined.

\section{The Partiality Monad} \label{sec:partiality}

As indicated in the introduction we define the partiality monad $\pbot{(-)}$ as a QIIT,
defining the type $\pbot A$ simultaneously with an ordering relation
$\order$ on $\pbot A$.
We will first describe the constructors and the elimination principle
of this definition, and then show that $\pbot A$ is the underlying type of the free
$\omega$-cpo (see Definition~\ref{def:omega-cpo}) on $A$, thus
proving that $\pbot{(-)}$ is a monad.
In the final part of this section we will give a characterisation of
the ordering relation; this is perhaps not as trivial as one might
expect, given the relation's definition.

Note that our construction of $\pbot A$ can be seen as a further example of a free algebraic structure defined in type theory.
It was discussed in the HoTT book~\cite[Chap.\@~6.11]{HoTTbook} that free groups can be defined as (in our terminology) quotient inductive types, while it is well-known that even simpler examples can be defined as ordinary inductive types.

\subsection{The Definition and Its Elimination Principles}

Let $A$ be a set.
We define the set $\pbot A$ simultaneously with a binary propositional
relation on $\pbot A$, written $\order$.
The set $\pbot A$ is generated by the following four
constructors, plus a set-truncation constructor:
\begin{alignat*}{6}
 &\eta &\;& : &\;& A \to \pbot A &\qquad\qquad& \LUB &\;& : &\;& \left(\sm{s : \N \to \pbot A} \prd{n:\N} s_n \order s_{n+1}\right) \to \pbot A \\
 &\bot && : && \pbot A && \alpha && : && \prd{x,y : \pbot A} x \order y \to y \order x \to x = y
\end{alignat*}
The constructor $\eta$ tells us that any element of $A$ can be viewed
as an element of $\pbot A$, and $\bot$ represents a non-terminating
computation.
The constructor $\LUB$ is intended to form least upper bounds of
increasing sequences, and $\alpha$ ensures that the ordering relation
$\order$ is antisymmetric.

The relation $\order$ is a type family that is indexed twice by
$\pbot A$.
It is generated by six constructors.
One of these constructors says that, for any $x, y : \pbot A$, the
type $x \order y$ is a proposition
(\mbox{$\prd{p,q : x \order y}  p = q$}).
Because any two proofs of $x \order y$ are equal, we do not name the
constructors of the ordering relation.
The remaining constructors are given as inference rules, where each
rule is implicitly $\Pi$-quantified over its unbound variables (the
same comment applies to other definitions below):
\begin{mathpar}
  \inferrule{ }{x \order x}
  \and
  \inferrule{x \order y \and y \order z}{x \order z}
  \and
  \inferrule{ }{\bot \order x}
  \and
  \inferrule{ }{\prd{n:\N}s_n \order \LUB(s,p)}
  \and
  \inferrule{\prd{n:\N}s_n \order x}{\LUB(s,p) \order x}
\end{mathpar}
The rules state that $\order$ is reflexive and transitive, that $\bot$ is at least as small as any other element of $\pbot A$, and that $\LUB$ constructs least upper bounds.

Now we will give the elimination principle of $\pbot {(-)}$ and
$\order$.
This principle can be stated in different ways.
One way would be to state it as an induction principle, along the
following lines:
\emph{
 Given a family
 $P : \pbot A \to \mathsf{Set}$ and [\ldots something
 for $\order$\ldots], and given elements of $P(\bot)$, $\prd{a :
   A}P(\eta(a))$, [\ldots and so on\ldots], we can conclude that
 $\prd{x : \pbot A} P(x)$ and [\ldots].
}
We take this approach in our formalisation; for another example, see
the presentation of the Cauchy reals in the HoTT
book~\cite[Chap.\@~11.3.2]{HoTTbook}.
However, because the two types are defined simultaneously and involve
constructors targeting the equality type, the induction principle may
look somewhat involved and perhaps even ad-hoc, and it may not be
obvious that it is the ``correct'' one.

Instead, we present a universal property.
\citeauthor{gabe:thesis}~\cite{gabe:thesis} and \citeauthor{altenkirch2016quotient}~\cite{altenkirch2016quotient} have worked out a general form and rules for a large class of quotient
inductive-inductive types.
In their setting, any specification of a QIIT gives rise to a
\emph{category of algebras}, following methods that have been used for
W-types~\cite{awodeyGamSoja_indTypesInHTT} and certain higher
inductive types~\cite{Sojakova14}, and if this category has a
(homotopy-) initial object, then this object is taken as the
definition of the QIIT.
We use the following algebras:
\begin{definition}[partiality algebras] \label{def:part-alg}
 A \emph{partiality algebra} over the set $A$ consists of 
 a set $X$;
 a propositional binary relation on $X$, $\order_X$;
  an element $\bot_X : X$, a family $\eta_X : A \to X$, and a family $\LUB_X : (\sm{s : \N \to X}\prd{n : \N} s_n \order_X s_{n+1}) \to X$;
  and the following laws:
\begin{equation*}
 \begin{alignedat}{6}
 & x \order_X x & \qquad  & x \order_X y \to y \order_X z \to x \order_X z \\
 & \bot_X \order_X x && x \order_X y \to y \order_X x \to x = y \\
 & \prd{n : \N} s_n \order_X \LUB_X(s,p) &&
   \left(\prd{n : \N} s_n \order_X x\right) \to \LUB_X(s,p) \order_X x
 \end{alignedat}
\end{equation*}
The type $X$ and the type family $\order_X$ are allowed to target
universes distinct from the one that $A$ lives in.

For two partiality algebras over the same set $A$,
$(X, \order_X, \bot_X, \eta_X, \LUB_X)$ and
$(Z, \order_Z, \bot_Z, \eta_Z, \LUB_Z)$, a \emph{morphism of partiality algebras} from the former to the latter consists of a function ${f : X \to Z}$ satisfying the
following laws:
First, $f$ has to respect the ordering relation, $f^{\order} : x \order_X y \to f(x) \order_Z f(y)$.
Second, $f$ has to preserve some of the constructors, $f(\bot_X) = \bot_Z$,
$f \circ \eta_X = \eta_Z$, and $f(\LUB_X(s,p)) = \LUB_Z(f \circ s, f^{\order} \circ p)$.

Let us denote this structure of objects and morphisms by $\mathsf{Part}_A$.
\end{definition}
The structure $\mathsf{Part}_A$ is a category, in which the identity
morphism is the identity function, and composition of morphisms is
composition of functions.

We can now make the elimination principle precise.
Note that the tuple $(\pbot A, \order, \bot, \eta, \LUB)$ is a
partiality algebra.
As the \emph{elimination principle of $\pbot A$ and $\order$} we take
the statement that there is a unique (up to equality) morphism from
this partiality algebra to any other partiality algebra over $A$.
In the terminology of the HoTT book~\cite{HoTTbook},
the statement that there is a morphism is basically the recursion
principle of $\pbot{(-)}$ and $\order$, while uniqueness gives us the
power of the induction principle (with \emph{propositional}
computation rules).
Note that allowing the type $X$ and the type family $\order_X$ to
target arbitrary universes enables us to make use of large
elimination.

We do not lose anything by using a universal property instead of an
induction principle, at least for the induction principle referred to
in the following theorem. The theorem is similar to results due
to \citeauthor{gabe:thesis}~\cite{gabe:thesis} and
\citeauthor{altenkirch2016quotient}~\cite{altenkirch2016quotient}. It
is stated without proof here, but a full proof of the fact can be
found in our Agda development.
\begin{theorem}
  The elimination principle of $\pbot A$ and $\order$ can be stated as
  an induction principle.
  This induction principle, which comes with propositional rather than
  definitional computation rules, is interderivable with the universal
  property given above.
\end{theorem}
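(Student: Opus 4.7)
The plan is to establish interderivability by proving each direction separately. For the forward direction, universal property implies induction, I would use the standard total-space construction. An induction principle requires, for each dependent family $P : \pbot A \to \Set$ equipped with a propositional family $Q$ over $\order$, dependent methods for $\bot$, $\eta$, $\LUB$, and antisymmetry, and dependent versions of the order constructors, a section $\prd{x : \pbot A} P(x)$ that respects the data. To extract such a section from the universal property, I form a partiality algebra on the total space $\sm{x : \pbot A} P(x)$, with ordering defined by $(x,u) \order^{\Sigma} (y,v) \jdgeq \sm{p : x \order y} Q(x,y,u,v,p)$. The partiality-algebra axioms on this total space are routine to verify; the only subtle one is antisymmetry, which combines base antisymmetry with the dependent antisymmetry method. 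The universal property delivers a morphism from $(\pbot A, \order)$ into this algebra, and composing with the first projection yields a self-morphism of $(\pbot A, \order)$ that, by uniqueness, must be the identity. The resulting dependent function is the desired eliminator, and since the morphism laws such as $f(\LUB(s,p)) = \LUB_X(f \circ s, f^{\order} \circ p)$ are identities rather than definitional equations, the induced computation rules are only propositional.

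For the backward direction, induction implies universal property, I would establish existence and uniqueness of morphisms into an arbitrary partiality algebra $(X, \order_X, \bot_X, \eta_X, \LUB_X)$ separately. Existence follows from the non-dependent recursor obtained by instantiating the induction principle with the constant families $P(x) \jdgeq X$ and $Q(x,y,u,v,p) \jdgeq u \order_X v$, supplying methods from the algebra structure on $X$. For uniqueness, given two morphisms $f, g : \pbot A \to X$, I apply induction to the proposition-valued family $P(x) \jdgeq (f(x) = g(x))$ with $Q$ taken trivially: because $X$ is a set each $P(x)$ is a proposition, so the antisymmetry and order methods are automatic, and only the $\bot$, $\eta$, and $\LUB$ methods remain to be checked, each following immediately from the shared preservation laws of $f$ and $g$.

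The main obstacle is not conceptual but notational: formulating the correct notion of displayed partiality algebra, and in particular the dependent antisymmetry method, which must relate an element of $P(x)$ and an element of $P(y)$ along the path $\alpha(p, q) : x = y$ using transport. Once this bookkeeping is in place both directions become essentially mechanical, as witnessed by the accompanying Agda formalisation. The statement is ultimately an instance of the general interderivability result for quotient inductive-inductive signatures developed by \citeauthor{gabe:thesis}~\cite{gabe:thesis} and \citeauthor{altenkirch2016quotient}~\cite{altenkirch2016quotient}, specialised to the partiality algebra signature of Definition~\ref{def:part-alg}.
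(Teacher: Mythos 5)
Your proposal is correct and follows essentially the same route as the paper, which states this theorem without an in-text proof and defers to the Agda formalisation; the total-space construction plus the uniqueness argument (first projection is an algebra self-morphism, hence the identity) is precisely the technique the paper itself uses for the propositional special case in Lemma~\ref{lem:simplified-elim}, and your backward direction (recursor for existence, induction on $f(x)=g(x)$ for uniqueness) is the standard complement. The only step you gloss over is transporting the second component of the morphism along the propositional equality $\fstAlgmorph(m(x))=x$ to obtain an element of $P(x)$, which is also exactly why the resulting computation rules are merely propositional, as you note.
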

  Note that we could have defined $\pbot A$ and $\order$ differently.
  For instance, we could have omitted the set-truncation constructor
  from the definition of $\pbot A$, and then proved that the type is a
  set, following the approach taken for the Cauchy reals in the
  HoTT book~\cite{HoTTbook}.
  However, if we had done this, then our definitions would have been
  less close to the general framework mentioned
  above~\cite{gabe:thesis,altenkirch2016quotient}.

As a simple demonstration of the universal property we construct an
induction principle for $\pbot A$ that can be used when eliminating
into a proposition.
Following the terminology of the HoTT
book~\cite[Chap.\@~11.3.2]{HoTTbook}, we call it \emph{partiality
  induction}:
\begin{lemma}
  \label{lem:simplified-elim}
  Let $P$ be a family of propositions on $\pbot A$ such that both $P(\bot)$
  and ${\prd{a:A} P(\eta(a))}$ hold.
  Assume further that, for any increasing sequence
  $s :$ \mbox{$\N \to \pbot A$} (with corresponding proof $p$),
  $\prd{n : \N} P(s_n)$ implies $P(\LUB(s,p))$.
  Then we can conclude $\prd{x : \pbot A} P(x)$.
\end{lemma}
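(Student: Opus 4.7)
The plan is to invoke the universal property of $\pbot A$ by constructing a suitable partiality algebra that sits over $\pbot A$. Take as carrier $X \jdgeq \sm{x : \pbot A} P(x)$, which is a set because $\pbot A$ is a set and each $P(x)$ is a proposition. Define the ordering by $(x, p) \order_X (y, q) \jdgeq x \order y$; this is a propositional relation inherited from $\order$. Put $\bot_X \jdgeq (\bot, p_\bot)$ using the hypothesis $P(\bot)$, $\eta_X(a) \jdgeq (\eta(a), p_\eta(a))$ using ${\prd{a:A} P(\eta(a))}$, and, given an increasing sequence $(\sigma, q)$ with $\sigma : \N \to X$ and $q : \prd{n : \N} \sigma_n \order_X \sigma_{n+1}$, define $\LUB_X(\sigma, q)$ by first forming the $\pbot A$-sequence $\fstAlgmorph \circ \sigma$ with its inherited monotonicity proof, and then pairing its least upper bound in $\pbot A$ with the proof of $P$ supplied by the third hypothesis applied to $\lam{n} \mathsf{snd}(\sigma_n)$.

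Next I would check that $X$ with this structure is a partiality algebra. Reflexivity, transitivity, $\bot$-minimality and the two least-upper-bound laws for $\order_X$ reduce immediately to the corresponding laws for $\order$ on $\pbot A$, since the second components of pairs are propositional and therefore carry no data. Antisymmetry follows because an equality $(x, p) = (y, q)$ in $X$ is equivalent to the equality $x = y$ in $\pbot A$, which we obtain from antisymmetry of $\order$. By construction the first projection $\fstAlgmorph : X \to \pbot A$ respects all the operations and the ordering, and hence is a morphism of partiality algebras.

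Now apply the elimination principle of $\pbot A$ to obtain a morphism $f : \pbot A \to X$ in $\mathsf{Part}_A$. Composing with $\fstAlgmorph$ yields a morphism $\fstAlgmorph \circ f : \pbot A \to \pbot A$, and the identity function is also such a morphism, so the uniqueness half of the universal property gives $\fstAlgmorph \circ f = \mathrm{id}_{\pbot A}$. Hence for every $x : \pbot A$ we have $\fstAlgmorph(f(x)) = x$, and transporting $\mathsf{snd}(f(x))$ along this equality yields the required element of $P(x)$. The only step that requires real care is verifying that $X$ is a partiality algebra; once this is done, the fact that each $P(x)$ is a proposition makes all the bookkeeping collapse onto the corresponding laws for $\pbot A$, so no genuine obstacle arises.
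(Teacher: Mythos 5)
Your proposal is correct and follows essentially the same route as the paper's proof: build the partiality algebra on $\sm{x : \pbot A} P(x)$ with the order inherited from $\order$, obtain a morphism by initiality, and use uniqueness to show that its composite with the first projection is the identity. The only difference is that you spell out the verification of the algebra laws and the final transport step, which the paper leaves implicit.
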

\begin{proof}
  The proof uses a standard method.
  We define a partiality algebra where the set is $Z \jdgeq \sm{x :
    \pbot A} P(x)$; the binary relation is $\order$, ignoring the
  second projections of the values in $Z$; and the rest of the algebra
  is constructed using the assumptions.
  The universal property gives us a morphism $m$ from the initial
  partiality algebra to this one, and in particular a function of type
  $\pbot A \to Z$.
  We are done if we can show that this function, composed with the
  first projection, is the identity on $\pbot A$.
  Note that the first projection can be turned into a partiality
  algebra morphism $\fstAlgmorph$.
  Thus, by uniqueness, the composition of $\fstAlgmorph$ and $m$ has to
  be the unique morphism from the initial partiality algebra to
  itself, and the function component of this morphism is the identity.
  \qed
\end{proof}

\subsection{ω-Complete Partial Orders}
\label{sec:omega-cpos}

Another way of characterising our quotient inductive-inductive
partiality monad is to say that $\pbot A$ is the \emph{free (pointed)
  $\omega$-cpo} over $A$:
\begin{definition}
  \label{def:omega-cpo}
  Let us denote the category $\mathsf{Part}_{\zero}$, where
  $\zero$ is the empty type, by \omegaCPOCategory{}. An
  \emph{$\omega$-cpo} is an object of this category.
\end{definition}
Let us quickly check that this definition makes sense.
A partiality algebra on $\zero$ is a set $X$ with a binary
propositional relation $\order_X$ that is a partial order.
There is a least element $\bot_X$ and any increasing sequence has a
least upper bound.
There is also a function of type $\zero \to X$, which we omit below as it carries no information.

We can now relate the category of
sets~\cite[Example~9.1.7]{HoTTbook}, written $\mathsf{SET}$, to the
category \omegaCPOCategory{}.
For any $\omega$-cpo we can take the underlying set, and it is easy to
see that this yields a functor, in the sense of the HoTT
book~\cite[Definition~9.2.1]{HoTTbook}, $\mathsf{U} :
\omegaCPOCategory{} \to \mathsf{SET}$.

We also have a functor $\mathsf{F} : \mathsf{SET} \to \omegaCPOCategory{}$, constructed as follows:
The functor maps a set $A$ to the $\omega$-cpo $(\pbot A, \order, \bot, \LUB)$.
For the morphism part, assume that we have a function $f : A \to B$.
Then $(\pbot B, \order, \bot, \eta \circ f, \LUB)$ is an $A$-partiality algebra, and hence there is a morphism to this algebra from the initial $A$-partiality algebra $(\pbot A, \order, \bot, \eta, \LUB)$.
By removing the components $\eta \circ f : A \to \pbot B$ and $\eta : A \to \pbot A$ we get a morphism between $\omega$-cpos.

The function $\eta$ lifts to a natural transformation from the
identity functor to $\mathsf{U} \circ \mathsf{F}$.
In order to construct a natural transformation from $\mathsf{F} \circ
\mathsf{U}$ to the identity functor, assume that we are given some
$\omega$-cpo $X$.
We can construct an $\omega$-cpo morphism from
$\mathsf{F}(\mathsf{U}(X))$ to $X$ by noticing that $(\mathsf{U}(X),
\order_X, \bot_X, \mathit{id}, \LUB_X)$ is a partiality algebra on
$\mathsf{U}(X)$, and thanks to initiality we get a morphism
$m$ from $\mathsf{F}(\mathsf{U}(X))$ to $X$
satisfying $m \circ \eta = \mathit{id}$.
After proving some equalities we end up with the following result,
where the definition of ``adjoint'' is taken from the HoTT
book~\cite[Definition~9.3.1]{HoTTbook}:
\begin{theorem}
  \label{thm:left-adjoint}
 For a given set $A$, the functor $\mathsf F$ is a left adjoint to the forgetful functor $\mathsf U$.
 This means that $\mathsf F(A)$ can be seen as the \emph{free} $\omega$-cpo over $A$. \qed
\end{theorem}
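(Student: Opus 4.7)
The plan is to verify the adjunction in the unit--counit formulation that is implicit in the text preceding the theorem. The unit, which I will also call $\eta$, is the family of functions $A \to \mathsf{U}(\mathsf{F}(A))$ given by the partiality-algebra constructor $\eta$, and the counit $\epsilon_X : \mathsf{F}(\mathsf{U}(X)) \to X$ is the morphism $m$ produced by initiality (satisfying $m \circ \eta = \mathit{id}$). With these in hand, what remains is (i) naturality of $\eta$, (ii) naturality of $\epsilon$, and (iii) the two triangle identities.

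The pervasive strategy is to observe that any morphism with domain $\pbot A$ that preserves the partiality-algebra structure into some algebra $Y$ must equal the unique such morphism given by initiality. So whenever we need to show that two maps $\pbot A \to Y$ agree, we equip $Y$ with a partiality-algebra structure over $A$ making both maps into partiality-algebra morphisms, and conclude equality by uniqueness. Concretely: for naturality of $\eta$, given $f : A \to B$, the square $\mathsf{U}(\mathsf{F}(f)) \circ \eta_A = \eta_B \circ f$ is immediate from the definition of $\mathsf{F}(f)$, since $\mathsf{F}(f)$ was constructed as the initial morphism into the $A$-partiality algebra whose $\eta$-component is $\eta_B \circ f$. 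For naturality of $\epsilon$, given an $\omega$-cpo morphism $g : X \to Y$, both $g \circ \epsilon_X$ and $\epsilon_Y \circ \mathsf{F}(\mathsf{U}(g))$ are candidates for the unique partiality-algebra morphism from $(\pbot(\mathsf{U}(X)), \order, \bot, \mathit{id}_{\mathsf{U}(X)}, \LUB)$ into the algebra on $\mathsf{U}(X)$ obtained from $Y$ by taking $g$ itself as the $\eta$-component; a short check that each preserves all constructors, followed by uniqueness, gives the required equation.

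For the triangle identities, one of them---$\mathsf{U}(\epsilon_X) \circ \eta_{\mathsf{U}(X)} = \mathit{id}_{\mathsf{U}(X)}$---is exactly the defining property $m \circ \eta = \mathit{id}$ of the counit. The other, $\epsilon_{\mathsf{F}(A)} \circ \mathsf{F}(\eta_A) = \mathit{id}_{\mathsf{F}(A)}$, is an equation of $\omega$-cpo morphisms $\pbot A \to \pbot A$, which we extend to partiality-algebra morphisms over $A$ by observing that both sides send $\eta_A$ to $\eta_A$; the identity on $\pbot A$ is visibly the initial such morphism, and the composite is another, so uniqueness yields the identity.

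The main obstacle is bookkeeping rather than mathematical content: one has to track carefully which partiality algebra is acting as the domain, what the $\eta$-component of each algebra is, and which laws need to be checked when promoting an $\omega$-cpo morphism to a partiality-algebra morphism. Once these checks are laid out, every nontrivial equation reduces to an application of initiality of $\pbot A$. A minor additional subtlety is that the HoTT book's \emph{adjoint} in Definition~9.3.1 is stated for categories (not just precategories), so we must use propositional extensionality (and, as noted earlier, univalence where appropriate) to know that $\mathsf{SET}$ and $\omegaCPOCategory{}$ are genuine categories; this is invoked in the background but does not affect the argument above.
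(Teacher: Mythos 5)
Your proposal is correct and takes essentially the same route as the paper, which constructs exactly the same unit and counit in the text preceding the theorem and leaves the remaining naturality squares and triangle identities as ``some equalities'' to be proved---each of which is verified just as you describe, by exhibiting both sides as partiality-algebra morphisms out of the initial algebra and appealing to uniqueness. One small correction to your closing remark: the HoTT book's Definition~9.3.1 of adjoint functors is stated for precategories, so categoryhood of $\mathsf{SET}$ and \omegaCPOCategory{} is not actually required for the adjunction itself; the paper invokes univalence for categoryhood only as a separate concern.
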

Thus we get a justification for calling the concept that we are
discussing the partiality \emph{monad}:
\begin{corollary}
  The composition
  $\mathsf U \circ \mathsf F : \mathsf{SET} \to \mathsf{SET}$, which
  maps objects $A$ to $\pbot A$, is a monad. \qed
\end{corollary}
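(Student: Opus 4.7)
The plan is to invoke the standard categorical construction showing that any adjunction gives rise to a monad on the domain of the left adjoint, and then to check that the resulting monad structure on $\pbot{(-)}$ agrees with the data we have already produced in this section. Since Theorem~\ref{thm:left-adjoint} establishes that $\mathsf{F} \dashv \mathsf{U}$, with unit $\eta$ (the natural transformation from the identity functor on $\mathsf{SET}$ to $\mathsf{U} \circ \mathsf{F}$ obtained from the constructor $\eta$), the endofunctor $T \jdgeq \mathsf{U} \circ \mathsf{F}$ on $\mathsf{SET}$ inherits the structure of a monad in the usual way.

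Concretely, I would define the unit of the monad to be the natural transformation $\eta : \mathrm{id}_{\mathsf{SET}} \to T$ already constructed before Theorem~\ref{thm:left-adjoint}, and the multiplication $\mu : T \circ T \to T$ to be $\mathsf{U}\,\epsilon\,\mathsf{F}$, where $\epsilon : \mathsf{F} \circ \mathsf{U} \to \mathrm{id}_{\omegaCPOCategory{}}$ is the counit supplied by the adjunction (the natural transformation built from the initiality of $\mathsf{F}(\mathsf{U}(X))$ that was described just above the theorem). The naturality of $\mu$ is inherited from the naturality of $\epsilon$ and the functoriality of $\mathsf{U}$ and $\mathsf{F}$.

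Then I would verify the three monad laws---left unit $\mu \circ T\eta = \mathrm{id}_{T}$, right unit $\mu \circ \eta T = \mathrm{id}_{T}$, and associativity $\mu \circ T\mu = \mu \circ \mu T$---by the textbook argument that reduces each of these to a triangle identity or to naturality of the counit. These are entirely formal manipulations inside the (pre)category structure of $\mathsf{SET}$ and $\omegaCPOCategory{}$, using only composition, identities, and the unit/counit equations of the adjunction, and in particular do not require any further reasoning about the internal structure of $\pbot A$ or the relation $\order$.

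The main (mild) obstacle is that we are working with precategories in the sense of the HoTT book, so equalities between morphisms are set-level equalities rather than judgmental. However, because the underlying hom-sets are sets and the triangle identities have already been packaged into the statement of Theorem~\ref{thm:left-adjoint}, the standard proof transports verbatim; there is no genuinely new mathematical content beyond the adjunction. Hence the corollary follows, and unwinding definitions shows that the object part of $T$ is indeed $A \mapsto \pbot A$, as claimed. \qed
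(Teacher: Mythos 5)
Your proposal is correct and matches the paper's intended argument: the corollary is derived directly from Theorem~\ref{thm:left-adjoint} via the standard fact that an adjunction $\mathsf F \dashv \mathsf U$ induces a monad $\mathsf U \circ \mathsf F$ with unit $\eta$ and multiplication $\mathsf U\,\epsilon\,\mathsf F$. (The paper also remarks that one can alternatively build the monad structure on $\pbot{(-)}$ directly from initiality, but that is presented as an aside, not as the proof of this corollary.)
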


 Note that one can also construct a monad structure on $\pbot {(-)}$ directly.
 Let us fix the set $A$.
 The unit is given by $\eta$.
 For the multiplication $\mu : \pbot{(\pbot A)} \to \pbot A$, note
 that $\pbot A$ can be given the structure of a partiality algebra
 over $\pbot A$ in a trivial way:
 the underlying set is $\pbot A$, the function
 $\eta_{\pbot A} : \pbot A \to \pbot A$ is the identity,
 $\order_{\pbot A}$ is $\order$, and so on.
 This gives us the function $\mu$ as the unique morphism from the
 initial partiality algebra to this one.
 Proving the monad laws is straightforward.

\subsection{A Characterisation of the Relation ⊑} \label{subsec:characterisation-of-order}

To further analyse the QIIT construction, we show how the relation
$\order$ on the set $\pbot A$ behaves.\footnote{The work presented in
  Sect.~\ref{subsec:characterisation-of-order} was done in
  collaboration with Paolo Capriotti.}
These results are useful when working with the partiality monad, and
will play an important role in the next section of the paper.
The arguments are only sketched here, details are given in the formalisation.
We use the propositional truncation $\trunc -$ (also known as
``squashing''), which turns a type into a proposition. It can be
implemented by quotienting with the trivial relation.

We know that $\bot \order y$ is (by definition) satisfied for any $y :
\pbot A$, and for the least upper bound we have that $\LUB(s,q) \order
y$ is equivalent to $\prd{n:\N} s_n \order y$. The following lemma
provides a characterisation of $\eta(a) \order y$, for any $a : A$:

\begin{lemma}\label{lem:order-lemma}
The binary relation $\order$ on $\pbot A$ has the following properties:
\begin{equation*}
 \begin{alignedat}{3}
  \eta(a) &\order \bot &\quad & \leftrightarrow &\quad & \mathsf{0} \\
  \eta(a) &\order \eta(b) &\quad & \leftrightarrow &\quad & a = b \\
  \eta(a) &\order \LUB(s,q) && \leftrightarrow && \trunc{\sm{n:\N}\eta(a) \order s_n}
 \end{alignedat}
\end{equation*}
\end{lemma}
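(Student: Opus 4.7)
The plan is to prove all three equivalences uniformly by an ``encode--decode'' style argument. For each fixed $a : A$, I construct a predicate $P_a : \pbot A \to \Prop$ whose values on the three constructors $\bot$, $\eta(b)$, $\LUB(s,q)$ match the right-hand-sides of the three equivalences, and then show $P_a(y) \leftrightarrow \eta(a) \order y$ for every $y : \pbot A$. The three equivalences of the lemma then follow by unfolding $P_a$ on the three constructors.

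The key observation is that $\Prop$ itself carries the structure of a partiality algebra over $A$, tailored to the fixed $a$: take the ordering to be logical implication, the bottom element to be $\zero$, the function $A \to \Prop$ to send $b$ to $a = b$, and the least upper bound of a sequence $P$ of propositions to be $\trunc{\sm{n:\N} P_n}$. Verifying the partiality algebra laws is straightforward: reflexivity and transitivity of implication, the fact that $\zero$ implies anything, antisymmetry from propositional extensionality, and the least-upper-bound laws (the elimination out of the truncation is valid because its target is a proposition). The universal property of $\pbot A$ then yields a partiality algebra morphism $P_a : \pbot A \to \Prop$ whose propositional computation rules read $P_a(\bot) = \zero$, $P_a(\eta(b)) = (a = b)$, and $P_a(\LUB(s,q)) = \trunc{\sm{n:\N} P_a(s_n)}$.

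For the forward direction $\eta(a) \order y \to P_a(y)$, I exploit the fact that, as a partiality algebra morphism, $P_a$ respects the ordering: so $\eta(a) \order y$ entails the implication $P_a(\eta(a)) \to P_a(y)$, and $P_a(\eta(a)) = (a = a)$ is inhabited by reflexivity. For the converse $P_a(y) \to \eta(a) \order y$, I apply partiality induction (Lemma~\ref{lem:simplified-elim}), which is available because $\eta(a) \order y$ is a proposition in $y$. The $\bot$ case is vacuous; the $\eta(b)$ case uses $a = b$ together with reflexivity of $\order$; in the $\LUB(s,q)$ case, the induction hypothesis and the constructor $s_n \order \LUB(s,q)$ combined with transitivity suffice, and the truncation may be eliminated since the target is a proposition.

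The main conceptual obstacle is resisting the temptation to analyse $\eta(a) \order y$ by induction on its proof: because $\order$ is defined inductive-inductively with $\pbot A$ and includes transitivity and reflexivity among its constructors, such a direct analysis would be awkward. Packaging the desired characterisation as a partiality algebra morphism into $\Prop$ sidesteps this entirely, and the forward direction falls out of the morphism's monotonicity alone.
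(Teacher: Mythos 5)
Your proof is correct and follows essentially the same route as the paper's: both treat $\Prop$ as a partiality algebra over $A$ with implication as the order, $\zero$ as bottom, $b \mapsto (a = b)$ as $\eta_{\Prop}$, and truncated existence as $\LUB_{\Prop}$, and then identify the resulting morphism with $\eta(a) \order -$ using Lemma~\ref{lem:simplified-elim}. The only detail worth adding is that propositional extensionality is needed not just for antisymmetry but also to show that $\Prop$ is a \emph{set}, which the definition of partiality algebra requires of the carrier.
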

We will give the proof of this lemma later and make a remark first.
Constructors in ``HIT-like'' definitions, e.g.\@ QIITs,
may in general be neither injective nor disjoint.
For instance, $\LUB (\lam n \bot,q)=\bot$.
However, we have the following lemma:
\begin{corollary} \label{cor:eta-order-y}
 For any $a:A$ and $y : \pbot A$, we have that $\eta(a) \order y$ implies that $\eta(a) = y$.
 Furthermore $\eta$ is injective: if $\eta(a) = \eta(b)$, then $a = b$.
 Moreover we have $\eta(a) \not= \bot$.
\end{corollary}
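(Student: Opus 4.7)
The plan is to derive all three claims from Lemma~\ref{lem:order-lemma}, with the first being the one requiring real work and the other two being immediate.

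For injectivity of $\eta$ (second claim), if $\eta(a) = \eta(b)$ then reflexivity gives $\eta(a) \order \eta(b)$, and Lemma~\ref{lem:order-lemma} yields $a = b$. For $\eta(a) \neq \bot$ (third claim), a hypothetical equality $\eta(a) = \bot$ would give $\eta(a) \order \bot$ by reflexivity, hence an element of $\zero$ by Lemma~\ref{lem:order-lemma}, which is absurd. Note that both of these arguments actually only require the first claim together with the corresponding equivalences from Lemma~\ref{lem:order-lemma}, so I would structure the proof to prove the first claim first.

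For the first claim, $\eta(a) \order y \to \eta(a) = y$, the idea is to do partiality induction on $y$ (Lemma~\ref{lem:simplified-elim}), which is applicable since $\pbot A$ is a set and so the family $P(y) \jdgeq (\eta(a) \order y \to \eta(a) = y)$ is a family of propositions. The case $y = \bot$ is vacuous by Lemma~\ref{lem:order-lemma}, and the case $y = \eta(b)$ follows by congruence from $a = b$, again by Lemma~\ref{lem:order-lemma}.

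The interesting step is the case $y = \LUB(s,q)$, with induction hypothesis $\prd{n:\N} (\eta(a) \order s_n \to \eta(a) = s_n)$. Given $\eta(a) \order \LUB(s,q)$, Lemma~\ref{lem:order-lemma} yields $\trunc{\sm{n:\N}\eta(a) \order s_n}$. Since the goal $\eta(a) = \LUB(s,q)$ is a proposition, I can eliminate the truncation to extract an honest $n$ with $\eta(a) \order s_n$, and the induction hypothesis gives $\eta(a) = s_n$. By antisymmetry ($\alpha$) it then suffices to show $\LUB(s,q) \order \eta(a)$, which by the least-upper-bound rule reduces to proving $s_m \order \eta(a)$ for every $m$. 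For $m \leq n$ this follows by transitivity along the monotone chain, since $s_m \order s_n = \eta(a)$. For $m > n$ we have $\eta(a) = s_n \order s_m$, so a second application of the induction hypothesis at $m$ yields $\eta(a) = s_m$, hence $s_m \order \eta(a)$ by reflexivity. The main obstacle is precisely this last substep: one has to notice that the induction hypothesis must be invoked twice, once to convert the truncated witness to a genuine equality, and once more to propagate that equality to indices beyond~$n$.
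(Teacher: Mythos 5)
Your proof is correct and follows essentially the same route as the paper's: partiality induction with $P(y) \jdgeq (\eta(a) \order y \to \eta(a) = y)$, elimination of the truncation obtained from Lemma~\ref{lem:order-lemma} because the goal is propositional, and repeated use of the induction hypothesis to conclude that $\eta(a)$ is an upper bound of $s$, whence $\LUB(s,q) = \eta(a)$ by antisymmetry. The only difference is that you make explicit the case split on $m \leq n$ versus $m > n$, which the paper's proof leaves implicit.
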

\begin{proof}[of Corollary~\ref{cor:eta-order-y}] 
 The last two claims are simple consequences of the lemma and reflexivity.
 For the first claim, let us fix $a : A$ and apply
 Lemma~\ref{lem:simplified-elim} with $P(y) \jdgeq \eta(a) \order y
 \to \eta(a) = y$.
 The only non-immediate step is the case for $\LUB(s,q)$, where we can assume $\prd{n : \N} P(s_n)$. 
 From $\eta(a) \order \LUB  (s,q)$ and Lemma~\ref{lem:order-lemma} we get $\trunc{\sm{n:\N}\eta(a) \order s_n}$.
 We are proving a proposition, so we can assume that we have $n:\N$ such that $\eta(a) \order s_n$.
 This implies that, for all $m \geq n$, $\eta(a) \order s_m$ and
 hence, by the ``inductive hypothesis'', $\eta(a) = s_m$.
 Thus $\eta(a)$ is an upper bound of~$s$, so we get $\LUB(s,q) \order
 \eta(a)$, which by antisymmetry implies $\LUB(s,q) = \eta(a)$.
 \qed
\end{proof}

The proof of the lemma is more technical.
The approach is similar to that used to prove some results about the
real numbers defined as a HIIT in the HoTT book~\cite[Theorems 11.3.16
and 11.3.32]{HoTTbook}.
We only give a sketch here, the complete proof can be found in our
Agda formalisation.
\begin{proof}[of Lemma~\ref{lem:order-lemma}]
  For every $a : A$ we construct a relation in $\pbot A \to
  \mathsf{Prop}$ by applying the elimination principle of $\pbot A$
  and $\order$, treating $\mathsf{Prop}$ as a partiality algebra over
  $A$ in the following way:
 \begin{alignat*}{6}
  &P \order_{\Prop} Q & \; & \jdgeq & \; &(P \to Q) &\qquad \qquad &    \eta_{\Prop}(b) & \; & \jdgeq & \; &(a = b) \\
  &\bot_{\Prop} && \jdgeq  && \zero && \LUB_{\Prop}(S,P) && \jdgeq  && \trunc{\sm{n:\N} S_n}
 \end{alignat*}
 Propositional extensionality is used to prove that $\mathsf{Prop}$ is
 a set (this is a variant of an instance of Theorem~7.1.11 in the HoTT
 book~\cite{HoTTbook}), and to prove the antisymmetry law.

Using Lemma~\ref{lem:simplified-elim} one can then show that the
defined relation is pointwise equal to $\eta(a) \order -$, and it is
easy to see that the relation has the properties claimed in the
statement of Lemma~\ref{lem:order-lemma}.
\qed
\end{proof}

Using the results above one can prove that the order is flat, in the
sense that if $x$ and $y$ are distinct from $\bot$ and $x \not= y$,
then $x \not\order y$ (see the formalisation).

\section{Relation to the Coinductive Construction} \label{sec:countchoice-equiv}

In this section we compare our QIIT to \citeauthor{capretta:2005}'s coinductive delay monad~\cite{capretta:2005}, quotiented by weak bisimilarity~\cite{Chapman2015}.
Let us start by giving \citeauthor{capretta:2005}'s construction,
as already outlined in the introduction.
$\Univ$ stands for a universe of types.%
\begin{definition}[delay monad and weak bisimilarity]
 For a set $A$ the \emph{\mbox{delay} monad} $\DD A$ is the coinductive type
 generated by $\now : A \to \DD A$ and $\later : \DD A \to \DD A$.
 The ``terminates with'' relation $\downarrow_{\DDinSubscript A}\ : \DD A \to A \to \Univ$ is the indexed inductive type generated by two constructors of type $\eta(a) \downarrow_{\DDinSubscript A} a$ and \mbox{$p \downarrow_{\DDinSubscript A} a \to \later(p) \downarrow_{\DDinSubscript A} a$}.
 Furthermore, $x$ and $y : \DD A$ are said to be \emph{weakly bisimilar}, written $x \sim_{\DDinSubscript A} y$, if $\prd{a:A} x \downarrow_{\DDinSubscript A} a \leftrightarrow y \downarrow_{\DDinSubscript A} a$ holds.
\end{definition}
It is easy to give $\DD A$ the structure of a monad.
Note that $x \downarrow_{\DDinSubscript A} a$ can alternatively be defined to be $\sm{n : \N} x = \later^n(\now (a))$.
The types $x \downarrow_{\DDinSubscript A} a$ and $x
\sim_{\DDinSubscript A} y$ are propositional, and
$\sim_{\DDinSubscript A}$ is an equivalence relation on $\DD A$.

The goal of this section is to show that, in the presence of countable choice, the partiality monad $\pbot A$ is equivalent to $\quotient{\DD A}{\sim_{\DDinSubscript A}}$.
(We use the notion of equivalence from the HoTT book~\cite{HoTTbook}, which for sets is equivalent to bijective correspondence.)
To understand the structure of the proof, let us observe that $\quotient{\DD A}{\sim_{\DDinSubscript A}}$ is constructed as a ``coinductive type that is quotiented afterwards'', while $\pbot A$ is an ``inductive type that is quotiented at the time of construction''.
To build a connection between these, it seems rather intuitive to consider an intermediate construction,
either a ``coinductive type that is quotiented at the time of construction'' or an ``inductive type that is quotiented afterwards''.
The theory of ``higher coinductive types'' has, as far as we know, not been explored much yet, so we go with the second option.
We do not even need an \emph{inductive} construction: it is well-known that coinductive structures can be represented using finite approximations, and here, it is enough to consider monotone functions.
Thus, first we will show that $\DD A$ is equivalent to a type of monotone sequences, carefully formulated, and that the equivalence lifts to the quotients.
Then we will prove that, assuming countable choice, the quotiented monotone sequences are equivalent to $\pbot A$.

\subsection{The Delay Monad and Monotone Sequences}

For a set $A$ we say that a function $g : \N \to A + \one$ is a monotone sequence if it satisfies the propositional property
\begin{equation*}
  \ismon(g) \; \jdgeq \; \prd{n : \N} \left(g_n = g_{n+1}\right) + \left((g_n = \inr(\star))   \times    (g_{n+1} \not= \inr(\star))\right).
\end{equation*}
The set of monotone sequences, $\sm{g : \N \to A + \one} \ismon(g)$,
is denoted by $\Seq_A$.
Below the notation $-_n$ will be used not only for functions, but also
for monotone sequences; $(g,p)_n$ means $g_n$.

As \citeauthor{Chapman2015}~\cite{Chapman2015} observe, one can construct a sequence of type $\N \to A + \one$ from an element of $\DD A$.
If their construction is tweaked a little, then the resulting sequences are monotone, and the map is an equivalence:
\begin{lemma} \label{lem:seq-coind-equiv}
 The types $\Seq_A$ and $\DD A$ are equivalent.
\end{lemma}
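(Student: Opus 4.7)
The plan is to define maps $\phi : \DD A \to \Seq_A$ and $\psi : \Seq_A \to \DD A$ and prove that they are mutually inverse. For $\phi$, I would first define the underlying function $\hat{\phi} : \N \to \DD A \to A + \one$ by recursion on the natural number with an inner case-analysis on the delay: $\hat{\phi}(n)(\now(a)) \jdgeq \inl(a)$; $\hat{\phi}(0)(\later(x')) \jdgeq \inr(\star)$; and $\hat{\phi}(n+1)(\later(x')) \jdgeq \hat{\phi}(n)(x')$. Monotonicity of the resulting sequence follows by a short induction on $n$, using that once the value is $\inl(a)$ it remains $\inl(a)$. For $\psi$, I would use guarded corecursion, setting $\psi(g) \jdgeq \now(a)$ if $g_0 = \inl(a)$ and $\psi(g) \jdgeq \later(\psi(\shift(g)))$ otherwise, where $\shift : \Seq_A \to \Seq_A$ drops the first element of the sequence (monotonicity is preserved, and the definition is productive because each unfolding yields a constructor).

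To show $\phi \circ \psi = \mathit{id}_{\Seq_A}$, I would use that $\ismon$ is a proposition and function extensionality, reducing the goal to $\hat{\phi}(n)(\psi(g)) = g_n$ for all $g$ and $n$. This I would prove by induction on $n$, with a case split on $g_0$ in each step. If $g_0 = \inl(a)$, then $\psi(g) = \now(a)$ and monotonicity of $g$ forces $g_n = \inl(a) = \hat{\phi}(n)(\now(a))$. If $g_0 = \inr(\star)$, then $\psi(g) = \later(\psi(\shift(g)))$; the case $n = 0$ is immediate, and the case $n+1$ reduces directly to the induction hypothesis applied to $\shift(g)$.

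For $\psi \circ \phi = \mathit{id}_{\DD A}$ I would invoke the assumption from Section~\ref{sec:background} that strong bisimilarity implies equality for coinductive types. Define the relation $R$ on $\DD A$ by $R(x,y) \jdgeq (y = \psi(\phi(x)))$, and verify that it is a bisimulation: when $x = \now(a)$, the equation $\hat{\phi}(0)(x) = \inl(a)$ yields $\psi(\phi(x)) = \now(a)$, forcing $y = \now(a)$; when $x = \later(x')$, we have $\phi(x)_0 = \inr(\star)$, hence $\psi(\phi(x)) = \later(\psi(\shift(\phi(x))))$, and the extensional equation $\shift(\phi(\later(x'))) = \phi(x')$ (which follows straight from the definition of $\hat{\phi}$) reduces this to $\later(\psi(\phi(x')))$, with the tails again $R$-related. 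Thus $x$ and $\psi(\phi(x))$ are bisimilar and hence equal.

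I expect the main obstacle to be this last step: phrasing the corecursive comparison so that the guardedness condition is respected, and in particular ensuring the extensional equation $\shift(\phi(\later(x'))) = \phi(x')$ (which relies on function extensionality and the propositional nature of $\ismon$) is cleanly in place before invoking the bisimulation-implies-equality principle.
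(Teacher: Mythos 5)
Your proposal is correct and follows essentially the same route as the paper: the map into $\Seq_A$ is defined by recursion on $n$ with a case distinction on the delayed computation, the inverse is defined by guarded corecursion casing on the head of the sequence, and the round trips are verified by induction on $n$ (using that $\ismon$ is propositional) and by exhibiting a strong bisimulation, respectively. The only cosmetic difference is that what you call $\shift$ (dropping the first element) is the paper's $\unshift$; the substance of the argument is the same, with your write-up merely spelling out the steps the paper labels ``straightforward''.
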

\begin{proof}
We can simply give functions back and forth.
Note that endofunctions on $\DD A$ that correspond to $\later$ and ``remove $\later$, if there is one'' can be mimicked for $\Seq_A$:
let us use the names $\shift$ and $\unshift : \Seq_A \to \Seq_A$ for the functions that are determined by $\shift(g)_0 \jdgeq \inr(\star)$, $\shift(g)_{n+1} \jdgeq g_n$, and $\unshift(g)_n \jdgeq g_{n+1}$.

Define $j : \DD A \to \Seq_A$ such that $j(\now (a))$ equals $\lam n \inl(a)$, and $j(\later(x))$ equals $\shift(j(x))$.
One way to do this is to define $j(z)_n$ by recursion on $n$, followed by case distinction on $z$.
Furthermore, define $h : \Seq_A \to \DD A$ in the following way:
Given $s : \Seq_A$, do case distinction on $s_0$.
If $s_0$ is $\inl(a)$, return $\now (a)$.
Otherwise, return $\later (h(\unshift(s)))$.
It is straightforward to show that $j$ and $h$ are inverses of each other.
\qed
\end{proof}
As an aside, our formalisation shows that
Lemma~\ref{lem:seq-coind-equiv} holds even if $A$ is not a set.

Next, we mimic the relation
$\downarrow_{\DDinSubscript{-}}$ by setting
\begin{align*}
 &{\downarrow_{\Seq{}}} : \Seq_A \to A \to \Univ \\
 &s \downarrow_{\Seq{}} a \; \jdgeq \; \sm{n:\N} s_n = \inl (a).
\end{align*}
 The relation $\downarrow_{\Seq{}}$ is not in general propositional. To remedy this, we can truncate and consider $\trunc{s \downarrow_{\Seq{}} a}$.
 Using strategies explained 
 by \citeauthor{krausgeneralizations}~\cite{krausgeneralizations},
 we have $\trunc{s \downarrow_{\Seq{}} a} \to s \downarrow_{\Seq{}} a$, so we can always extract a concrete value of $n$:
 a variant of the definition above in which the number $n$ is required to be minimal is propositional, and this definition can be shown to be logically equivalent to both $\trunc{s \downarrow_{\Seq{}} a}$ and $s \downarrow_{\Seq{}} a$. 
 See the formalisation for details.

We define the propositional relations $\order_{\Seq{}}$ and $\sim_{\Seq{}}$ by
\begin{align*}
 s \order_{\Seq{}} t \; &\jdgeq \; \prd{a:A} \trunc{s \downarrow_{\Seq{}} a} \to \trunc{t \downarrow_{\Seq{}} a}\text{ and}\\
 s \sim_{\Seq{}} t \; &\jdgeq \; s \order_{\Seq{}} t \times t \order_{\Seq{}} s\text{.}
\end{align*}
By checking that the equivalence from Lemma~\ref{lem:seq-coind-equiv} maps $\sim_{\Seq{}}$-related elements to $\sim_{\DDinSubscript A}$-related elements, we get:
\begin{lemma} \label{lem:quot-equiv}
 The sets $\quotient{\Seq_A}{\sim_{\Seq{}}}$ and $\quotient{\DD A}{\sim_{\DDinSubscript A}}$ are equivalent. \qed
\end{lemma}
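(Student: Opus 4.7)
The plan is to transport the equivalence from Lemma~\ref{lem:seq-coind-equiv} across the quotient operations by showing that the functions $j$ and $h$ respect the relations $\sim_{\Seq{}}$ and $\sim_{\DDinSubscript A}$, and then to appeal to the universal property of quotients. Concretely, if $h$ sends $\sim_{\Seq{}}$-related inputs to $\sim_{\DDinSubscript A}$-related outputs (and dually for $j$), then the composites $[-] \circ h$ and $[-] \circ j$ factor through the quotients, yielding maps $\bar h : \quotient{\Seq_A}{\sim_{\Seq{}}} \to \quotient{\DD A}{\sim_{\DDinSubscript A}}$ and $\bar j$ in the opposite direction; since $h \circ j$ and $j \circ h$ are already the identity on the underlying sets, the induced maps are mutual inverses, as can be checked by quotient induction into a propositional target (an equality in a set).

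The main technical step is thus to establish that the two ``terminates with'' relations agree under the equivalence, i.e.\@ that for every $s : \Seq_A$ and $a : A$ one has $\trunc{s \downarrow_{\Seq{}} a} \leftrightarrow h(s) \downarrow_{\DDinSubscript A} a$. Using the alternative characterisation $x \downarrow_{\DDinSubscript A} a \leftrightarrow \sm{n : \N} x = \later^n(\now(a))$ mentioned just before the lemma, and unfolding $s \downarrow_{\Seq{}} a$ to $\sm{n : \N} s_n = \inl(a)$, this reduces to showing that a witness of $s_n = \inl(a)$ corresponds to $h(s) = \later^n(\now(a))$. I would proceed by induction on $n$: if $s_0 = \inl(a)$ then the defining clause of $h$ yields $h(s) = \now(a)$; otherwise $s_0 = \inr(\star)$, the monotonicity condition propagates, and $h(s) = \later(h(\unshift(s)))$, so a witness at position $n+1$ in $s$ becomes a witness at position $n$ in $\unshift(s)$, to which the induction hypothesis applies and gives $h(s) = \later^{n+1}(\now(a))$. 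The converse direction is symmetric via $j$. From this correspondence it is immediate that $s \sim_{\Seq{}} t$ is logically equivalent to $h(s) \sim_{\DDinSubscript A} h(t)$, since both conditions simply express that the induced terminates-with relations on the corresponding delay values coincide.

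I expect the main obstacle to be the bookkeeping in this correspondence between the two terminates-with relations. Unfolding $h$ requires care because $\DD A$ is a coinductive type, so equalities such as $h(s) = \later(h(\unshift(s)))$ in the case $s_0 = \inr(\star)$ must be justified via strong bisimilarity together with the assumed principle that bisimilarity implies equality on coinductive types. Once the correspondence is in place, the lifting to quotients and the verification that $\bar h$ and $\bar j$ are mutually inverse are routine applications of the universal property of set quotients.
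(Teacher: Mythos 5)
Your proposal is correct and follows essentially the same route as the paper, which simply checks that the equivalence of Lemma~\ref{lem:seq-coind-equiv} maps $\sim_{\Seq{}}$-related elements to $\sim_{\DDinSubscript A}$-related elements and then lifts it to the quotients; you merely spell out the underlying correspondence between the two terminates-with relations in more detail. The only minor imprecision is that a witness $s_n = \inl(a)$ need not yield $h(s) = \later^{n}(\now(a))$ for that same $n$ (the number of $\later$s matches the \emph{first} such index), but since only the existentially quantified, truncated statement is needed, your argument goes through.
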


\subsection{Monotone Sequences and the QIIT Construction} \label{subsec:SeqVsQIIT}

As the final step of showing that $\quotient{\DD A}{\sim_{\DDinSubscript A}}$ and $\pbot A$ are equivalent, we show that $\quotient{\Seq_A}{\sim_{\Seq{}}}$ and $\pbot A$ are.
The plan is as follows:
There is a canonical function $w : \Seq_A \to \pbot A$ which can be extended to a function 
$\tilde w : \quotient{\Seq_A}{\sim_{\Seq{}}} \to \pbot A$.
The function $\tilde w$ is injective.
Furthermore, if we assume countable choice, then the function $w$, and thus also $\tilde w$, are surjective.
Thus $\tilde w$ is an equivalence.

Let us start by constructing $w$ and $\tilde w$.
We use a copairing function $\caseta : A + \one \to \pbot A$ defined by
$\caseta (\inl(a)) \jdgeq \eta(a)$ and $\caseta  (\inr(\star)) \jdgeq \bot$, and
define $w : \Seq_A \to \pbot A$ by $w(s,q) \jdgeq \bigsqcup (\caseta  \circ s , \ldots)$, with a canonical proof of monotonicity.
\begin{lemma} \label{lem:w-tilde}
 The function $w$ is monotone:
 $\prd{s,t:\Seq_A}s \order_{\Seq{}} t \to w(s) \sqsubseteq w(t)$.
 Thus $w$ extends to a map $\tilde w : \quotient{\Seq_A}{\sim_{\Seq{}}} \to A_\bot$.
\end{lemma}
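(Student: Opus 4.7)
The plan is to first establish monotonicity of $w$ with respect to $\order_{\Seq{}}$ and $\order$; the extension to $\tilde w$ is then immediate from antisymmetry and the universal property of the quotient.

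For monotonicity, fix $s, t : \Seq_A$ with $s \order_{\Seq{}} t$. By definition $w(s) = \LUB(\caseta \circ s, \ldots)$, so the least-upper-bound rule for $\LUB$ reduces the goal $w(s) \order w(t)$ to showing $\caseta(s_n) \order w(t)$ for every $n : \N$. I would do a case split on $s_n : A + \one$. If $s_n = \inr(\star)$, then $\caseta(s_n) \jdgeq \bot$ and the goal holds by the $\bot$-rule of $\order$. If $s_n = \inl(a)$, then $(n, \mathrm{refl})$ witnesses $s \downarrow_{\Seq{}} a$, so $\trunc{s \downarrow_{\Seq{}} a}$ holds; applying the hypothesis $s \order_{\Seq{}} t$ gives $\trunc{t \downarrow_{\Seq{}} a}$.

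Now the target $\eta(a) \order w(t)$ is a proposition (by the propositionality constructor of $\order$), so I may eliminate the propositional truncation directly and assume a concrete $m : \N$ with $t_m = \inl(a)$ (alternatively, using the minimal-$n$ trick noted just after the definition of $\downarrow_{\Seq{}}$). Then $\caseta(t_m) \jdgeq \eta(a)$, and the upper-bound rule $\prd{m} \caseta(t_m) \order \LUB(\caseta \circ t, \ldots) = w(t)$ finishes this case.

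For the extension, if $s \sim_{\Seq{}} t$ then both $s \order_{\Seq{}} t$ and $t \order_{\Seq{}} s$ hold, so by what we just proved $w(s) \order w(t)$ and $w(t) \order w(s)$; antisymmetry $\alpha$ gives $w(s) = w(t)$. Hence $w$ respects $\sim_{\Seq{}}$ and factors through the quotient to yield $\tilde w : \quotient{\Seq_A}{\sim_{\Seq{}}} \to \pbot A$ with $\tilde w \circ [-] = w$. There is no real obstacle here; the only subtle point is the truncation elimination, which is harmless because $\order$ is valued in propositions.
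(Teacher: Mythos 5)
Your proof is correct and follows essentially the same route as the paper's: case analysis on $s_n$, with the $\inr(\star)$ case handled by the $\bot$-rule, the $\inl(a)$ case handled by transporting $\trunc{s \downarrow_{\Seq{}} a}$ along the hypothesis to find a witness in $t$, and the extension to the quotient obtained by antisymmetry. The only cosmetic difference is how the truncation is discharged — you eliminate it into the propositional goal $\eta(a) \order w(t)$, while the paper extracts a concrete index $k(n)$ using the untruncation observation made after the definition of $\downarrow_{\Seq{}}$ — and both are fine.
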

\begin{proof}
For the second claim we show that $w$ maps elements related by $\sim_{\Seq{}}$ to equal elements.
This follows from the first claim by antisymmetry.
 For the first claim it suffices to find a function $k : \N \to \N$ such that, for all $n$, we have $\caseta(s_n) \sqsubseteq \caseta(t_{k(n)})$.
 Fix $n$.
 If $s_n$ is $\inr(\star)$, then $\caseta(s_n)$ is $\bot$, and $k(n)$ can thus be chosen arbitrarily.
 If $s_n$ is $\inl(a)$, then we have $s \downarrow_{\Seq{}} a$ and therefore $t \downarrow_{\Seq{}} a$, which gives us a number $k(n)$ such that $t_{k(n)} = \inl (a)$ and $\caseta(s_n) = \eta(a) = \caseta(t_{k(n)})$.
 \qed
\end{proof}

\begin{lemma}
 The function $\tilde w$ is injective:
  $\prd{s,t : \quotient{\Seq_A}{\sim_{\Seq{}}}} \tilde w (s) = \tilde w (t) \to s = t$.
\end{lemma}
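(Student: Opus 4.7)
The plan is to reduce the injectivity of $\tilde w$ to a statement about representatives, and then to apply the characterisation of $\order$ from Lemma~\ref{lem:order-lemma}. Since $\quotient{\Seq_A}{\sim_{\Seq{}}}$ is a set and the goal $\tilde w(s) = \tilde w(t) \to s = t$ is a proposition, I may use the universal property of the quotient in both arguments to reduce to the following claim: for all $s, t : \Seq_A$, if $w(s) = w(t)$ then $[s] = [t]$ in the quotient. By the defining property of the quotient this amounts to proving $w(s) = w(t) \to s \sim_{\Seq{}} t$, and by symmetry it suffices to establish $w(s) = w(t) \to s \order_{\Seq{}} t$.

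Unfolding $\order_{\Seq{}}$, I fix $a : A$ and assume $\trunc{s \downarrow_{\Seq{}} a}$; the goal $\trunc{t \downarrow_{\Seq{}} a}$ is a proposition, so I may assume a concrete $n$ with $s_n = \inl(a)$. Then $\caseta(s_n) = \eta(a)$, and since $w(s) = \LUB(\caseta \circ s, \ldots)$ is an upper bound of the sequence $\caseta \circ s$, we have $\eta(a) \order w(s)$. Using the assumption $w(s) = w(t)$ together with the definition of $w$, this transports to
\[
 \eta(a) \order \LUB(\caseta \circ t, \ldots).
\]
Applying Lemma~\ref{lem:order-lemma} to the right-hand side yields $\trunc{\sm{m : \N} \eta(a) \order \caseta(t_m)}$, and since I am again proving a proposition I may pick a concrete $m$ witnessing $\eta(a) \order \caseta(t_m)$.

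It remains to case-split on $t_m : A + \one$. If $t_m = \inr(\star)$ then $\caseta(t_m) = \bot$, and the first clause of Lemma~\ref{lem:order-lemma} makes $\eta(a) \order \bot$ empty, contradicting our hypothesis. Otherwise $t_m = \inl(b)$ for some $b$, and the second clause gives $a = b$, so $t_m = \inl(a)$ witnesses $t \downarrow_{\Seq{}} a$, whose truncation is the desired conclusion.

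The work is mostly bookkeeping once Lemma~\ref{lem:order-lemma} is available; the only subtle point is to make sure every existential extraction happens while proving a propositional goal, so that the truncation in $\trunc{\sm{m} \cdot}$ (and in $\trunc{s \downarrow_{\Seq{}} a}$) can be stripped off without cheating. The main potential obstacle is the first reduction step: descending into representatives in both arguments of the quotient requires a double application of the quotient's recursor on a propositional target, which is standard but needs the target to genuinely be a proposition — hence the crucial use of the fact that $\quotient{\Seq_A}{\sim_{\Seq{}}}$ is a set, so that the equality $s = t$ is propositional.
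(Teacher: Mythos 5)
Your proof is correct and follows essentially the same route as the paper's: reduce to representatives, strip the truncations against a propositional goal, and use the characterisation of $\eta(a) \order -$ from Lemma~\ref{lem:order-lemma} to transfer termination from $s$ to $t$. The only cosmetic difference is that the paper routes the middle step through Corollary~\ref{cor:eta-order-y} (concluding $w(s) = \eta(a) = w(t)$ before applying the $\LUB$ clause), whereas you keep everything at the level of $\order$ and case-split on $t_m$ directly; both are fine.
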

\begin{proof}
It suffices to show that, for $s,t : \Seq_A$, $w(s) = w(t)$ implies $s \sim_{\Seq{}} t$.
By symmetry, it is enough to fix $a:A$ and show $\trunc{s \downarrow_{\Seq{}} a} \to \trunc{t \downarrow_{\Seq{}} a}$, which follows from $s \downarrow_{\Seq{}} a \to \trunc{t \downarrow_{\Seq{}} a}$.
If $s \downarrow_{\Seq{}} a$ then $w(s) = \eta(a)$, and thus also $w(t) = \eta(a)$.
Using Lemma~\ref{lem:order-lemma} and Corollary~\ref{cor:eta-order-y} we then get $\trunc{\sm{n:\N} \eta(a) = \caseta(t_n)}$, which implies $\trunc{\sm{n:\N} t_n = \inl(a)}$.
\qed
\end{proof}

\begin{lemma}
 Under countable choice, $w$ is surjective:
  $\prd{x : A_\bot} \trunc{\sm{s : \Seq_A} w(s) = x}$.
\end{lemma}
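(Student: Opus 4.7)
The plan is to apply partiality induction (Lemma~\ref{lem:simplified-elim}) with the proposition $P(x) \jdgeq \trunc{\sm{s : \Seq_A} w(s) = x}$. The base cases are routine: for $x = \bot$ take the constant sequence $\lam n \inr(\star)$, whose image under $w$ is the least upper bound of the constantly-$\bot$ chain, which equals $\bot$ by antisymmetry; for $x = \eta(a)$ take the constant sequence $\lam n \inl(a)$, and again use antisymmetry to identify its image with $\eta(a)$.

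The step case at $x = \LUB(s, p)$ is where countable choice is used. The inductive hypothesis reads $\prd{n : \N} \trunc{\sm{t : \Seq_A} w(t) = s_n}$; applying countable choice commutes the $\Pi$ past the truncation to give $\trunc{\prd{n : \N} \sm{t : \Seq_A} w(t) = s_n}$. Since the goal is propositional, I may assume a family $(t_n)_{n : \N}$ with $w(t_n) = s_n$ has actually been given.

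The main obstacle is then to merge this family into a single monotone sequence $u : \Seq_A$ such that $w(u) = \LUB(s, p)$. I would define $u_k$ by a bounded diagonal search: set $u_k \jdgeq \inl(a)$ if some $(i, j)$ with $i, j \leq k$ satisfies $t_i(j) = \inl(a)$, and $u_k \jdgeq \inr(\star)$ otherwise. Monotonicity is immediate because any witness at stage $k$ is still a witness at stage $k + 1$. For well-definedness, two witnesses producing values $a$ and $b$ must agree: from $\caseta(t_i(j)) = \eta(a) \order w(t_i) = s_i$, Corollary~\ref{cor:eta-order-y} gives $s_i = \eta(a)$, and similarly $s_{i'} = \eta(b)$; since $s$ is a chain, one of $\eta(a) \order \eta(b)$ or $\eta(b) \order \eta(a)$ holds, and Lemma~\ref{lem:order-lemma} forces $a = b$.

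Finally, $w(u) = \LUB(s, p)$ is proved by antisymmetry. The direction $w(u) \order \LUB(s, p)$ holds because each $\caseta(u_k)$ is dominated by some $s_i$ (with $u_k = \inr(\star)$ being trivial), and $s_i \order \LUB(s, p)$. Conversely, to show $\LUB(s, p) \order w(u)$ it suffices to check $s_n \order w(u)$ for every $n$; via $s_n = w(t_n) = \LUB_m \caseta(t_n(m))$ this reduces to $\caseta(t_n(m)) \order w(u)$, which is trivial when $t_n(m) = \inr(\star)$ and, when $t_n(m) = \inl(a)$, follows from the defining property of $u$ by exhibiting any $k \geq \max(n, m)$.
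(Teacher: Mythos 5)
Your proof is correct and follows essentially the same route as the paper's: partiality induction with the same predicate, countable choice to extract the family of sequences, a diagonal merge into a single monotone sequence (justified by the ``at most one value in $A$'' observation via Corollary~\ref{cor:eta-order-y}), and antisymmetry for the final equality. The only difference is cosmetic: you realise the diagonalisation as a bounded search over pairs $(i,j)$ with $i,j \leq k$, whereas the paper fixes an isomorphism $\sigma : \N \to \N \times \N$ and scans the first $n+1$ values of $g' \circ \sigma$; both rely on the same uniqueness property to make the merged sequence well behaved.
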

\begin{proof}
 We apply the simplified induction principle presented in Lemma~\ref{lem:simplified-elim}.
 The propositional predicate is $P(x) \jdgeq \trunc{\sm{s : \Seq_A} w(s) = x}$.
 Both $P(\bot)$ and $P(\eta(a))$ are trivial: in the first case we use the sequence that is constantly $\inr(\star)$, while in the second case we take the one that is constantly $\inl(a)$.
 
 The interesting part is to show $P\left(\LUB(f,p)\right)$ for a given $f : \N \to \pbot A$ and $p : \prd{n : \N} f_n \order f_{n+1}$.
 By the mentioned induction principle, we can assume $\prd{n : \N} P(f_n)$, which unfolds 
 to
$\prd{n:\N} \trunc{\sm{t : \Seq_A} w(t) = f_n}$.
By countable choice, we can swap $\Pi_{n:\N}$ and $\trunc -$, which allows us to remove the truncation completely, because the goal is propositional.
Hence we can assume $\prd{n:\N} \sm{t : \Seq_A} w(t) = f_n$.

Using the usual distributivity law for $\prd{}$ and $\sm{}$ (sometimes called the ``type-theoretic axiom of choice''), we can assume that we are given $g : \N \to \Seq_A$ and a proof $\gamma : \prd{n:\N} w(g_n) = f_n$.
By dropping the monotonicity proof and uncurrying, $g$ gives us a function $g' : \N \times \N \to A + \one$ with the property that it assumes at most one value in $A$:
If $g'_{i,j} = \inl(a)$, then (using $\gamma$) $\eta(a) \order f_i$,
thus $\eta(a) \order \LUB(f,p)$, and hence $\LUB(f,p) = \eta(a)$ by Corollary~\ref{cor:eta-order-y}.
If we also have $g'_{k,m} = \inl(b)$, then $\eta(a) = \eta(b)$, which by Corollary~\ref{cor:eta-order-y} implies that $a = b$.

We use $g'$ to construct an element of $\Seq_A$.
Take an arbitrary isomorphism $\sigma : \N \to \N \times \N$ (a split surjection would also be sufficient),
and define a function $\tilde g : \N \to A + \one$ by
\begin{equation*}
  \tilde g(n) \; \jdgeq \; \begin{cases}
                            g'(\sigma_n)\text{,} & \mbox{if $n = 0$ or $g'(\sigma_n) \not= \inr(\star)$,} \\
                            \tilde g(n-1)\text{,} & \mbox{otherwise.}
                           \end{cases}
 \end{equation*}
The intuition is that $\tilde g(n)$ checks the first $n + 1$ results of $g' \circ \sigma$ and chooses the last which is of the form $\inl(-)$, if any, otherwise returning $\inr(\star)$.
Because $g'$ assumes at most one value in $A$ we get that $\tilde g$ is monotone, $q : \ismon(\tilde g)$.
Furthermore $(\tilde g, q) \downarrow_{\Seq{}} a$ holds if and only if we have $\sm{n:\N}g'(\sigma_n)=\inl(a)$.

In order to complete the proof of $P\left(\LUB(f,p)\right)$ we show that $w(\tilde g, q) = \LUB(f,p)$ by using antisymmetry:
\begin{itemize}
\item \textit{First part: $w(\tilde g, q) \order \LUB(f,p)$.}
After unfolding the definition of $w$ we see that it suffices to prove $\caseta (\tilde g_n) \order \LUB(f,p)$ for an arbitrary $n : \N$.
If $\tilde g_n$ is $\inr(\star)$, then this is trivial.
If $\tilde g_n$ is $\inl(a)$ for some $a:A$, then we can find a pair $(i,j)$ such that $g'_{i,j} = \inl(a)$.
Thus we get the following chain:
\begin{equation*}
 \caseta (\tilde g_n) \; = \; \caseta (g'_{i,j}) \; \order \; w(g_i) \; = \; f_i \; \order \; \LUB(f,p)
\end{equation*}
\item \textit{Second part: $\LUB(f,p) \order w(\tilde g, q)$.}
  Given $n : \N$, we show that $f_n \order w(\tilde g, q)$.
  By $\gamma_n$ we have $f_n = w(g_n)$.
  Thus it suffices to prove $w(g_n) \order w(\tilde g, q)$, which by Lemma~\ref{lem:w-tilde} follows if $g_n \order_{\Seq{}} (\tilde g, q)$.
  If $g_n(i) = \inl(a)$ for some $i$ and $a$, then we have $\inl(a) = g'_{n,i} = g'(\sigma(\sigma^{-1}_{n,i}))$, and thus $(\tilde g, q) \downarrow_{\Seq{}} a$.
  \qed
\end{itemize}
\end{proof}

This immediately shows that $\tilde w$ is surjective as well.
Putting the pieces together, we get the main result of this section:
\begin{theorem}
 In the presence of countable choice the map $\tilde w$ is an equivalence.
 Hence the three sets $\quotient{\DD A}{\sim_{\DDinSubscript A}}$, $\quotient{\Seq_A}{\sim_{\Seq{}}}$ and $A_\bot$ are equivalent.
\end{theorem}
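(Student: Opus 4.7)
The plan is to assemble the three preceding lemmas into the desired equivalence, using only standard HoTT facts about sets, quotients and surjective/injective maps.

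First I would observe that the first claim (that $\tilde w$ is an equivalence) essentially just requires collecting what has already been proved. The previous lemma shows that, under countable choice, $w : \Seq_A \to \pbot A$ is surjective. Since $\tilde w$ is defined so that $\tilde w \circ [-] = w$, and since the quotient map $[-] : \Seq_A \to \quotient{\Seq_A}{\sim_{\Seq{}}}$ is always surjective, surjectivity of $\tilde w$ follows immediately (a composite whose composite is surjective must have a surjective outer factor, by the usual reasoning with propositional truncation). The preceding lemma also tells us that $\tilde w$ is injective. Since both $\quotient{\Seq_A}{\sim_{\Seq{}}}$ and $\pbot A$ are sets, a standard result in the HoTT book gives that any injective and surjective map between sets is an equivalence; this is the point at which ``equivalence'' and ``bijective correspondence'' coincide.

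For the second claim I would then chain the equivalences. Lemma~\ref{lem:quot-equiv} already provides an equivalence $\quotient{\DD A}{\sim_{\DDinSubscript A}} \simeq \quotient{\Seq_A}{\sim_{\Seq{}}}$, and composing this with $\tilde w$ gives the three-way equivalence with $\pbot A$. Since equivalences between sets compose, no further verification is needed.

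There is essentially no hard step: the real work has been done in the preceding three lemmas (injectivity of $\tilde w$, surjectivity of $w$ via countable choice, and the coinductive/monotone-sequence equivalence). If anything is mildly delicate, it is only checking that surjectivity descends along the quotient map, and spelling out that ``injective $+$ surjective $=$ equivalence'' really is applicable here because we are working with $0$-truncated types. Both points are routine, so I would keep the proof to just a few lines, citing the relevant lemmas and the standard characterisation of equivalences between sets from the HoTT book.
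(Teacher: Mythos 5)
Your proposal matches the paper's own proof: the paper likewise notes that surjectivity of $w$ immediately gives surjectivity of $\tilde w$, combines this with the injectivity lemma, and invokes the HoTT-book fact (Theorem~4.6.3, specialised to sets) that an injective surjection between sets is an equivalence, finally chaining with Lemma~\ref{lem:quot-equiv}. Your argument is correct and essentially identical in structure.
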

\begin{proof}
 A function between sets is an equivalence exactly if it is surjective and injective.
 This is a special case of Theorem~4.6.3 in the HoTT book~\cite{HoTTbook}, which states that a function between arbitrary types is an equivalence if and only if it is surjective and an \emph{embedding}, which for sets is equivalent to being injective.
 \qed
\end{proof}

\section{Applications} \label{sec:applications}

The following examples show that our construction can be used in formalisations.

\subsection{Nonterminating Functions as Fixed Points}
\label{sec:nonterminating-functions}

Partiality algebras can be used to implement not necessarily
terminating functions. Let $(Y, \order_Y, \bot_Y, \eta_Y, \LUB_Y)$ be
a partiality algebra, and let $\varphi : Y \to Y$ be a monotone and
$\omega$-continuous function.
We can write down the least fixed point of $\varphi$ directly as
$\LUB_Y (\lam n \varphi^n(\bot_Y) , p)$, where $p$ is constructed from
the fact that $\bot_Y \order_Y \varphi(\bot_Y)$ and from the
monotonicity proof of $\varphi$. One does not need $\omega$-continuity
to write down this expression, but we use it to prove that the
expression is a fixed point of $\varphi$.
 
If $(Y, \order_Y, \bot_Y, \eta_Y, \LUB_Y)$ is a partiality algebra and
$X$ is any type, then the function space $X \to Y$ can be given the
structure of a partiality algebra in a canonical way (this is done for
\emph{dependent} types $\prd{x:X}Y(x)$ in the formalisation).
As an example of how this kind of partiality algebra can be used we
will construct a function $\search_q: A^\omega \to \pbot A$ that takes
an element of the coinductive set of streams $A^\omega$ and searches
for an element of the set $A$ satisfying the decidable predicate $q :
A \to \two$.
The function is constructed as the least fixed point of the following
endofunction on $A^\omega \to \pbot A$:
 \begin{align*}
  &\Phi(f)(a \dblcolon \mathit{as}) \; \jdgeq \; \mathsf{if}\ q(a)\ \mathsf{then}\ \eta(a)\ \mathsf{else}\ f(\mathit{as})
 \end{align*}
 It is straightforward to check that $f \order g$ implies $\Phi(f)
 \order \Phi(g)$ by applying $\Phi(f)$ and $\Phi(g)$ to a point $a
 \dblcolon \mathit{as}$ and doing case analysis on $q(a)$.
Thus $\Phi$ is monotone.
In a similar way one can verify that $\Phi$ is $\omega$-continuous.

\subsection{Functions from the Reals} \label{subsec:reals}

\newcommand{\R}{\mathbb{R}^q}

Let us consider the Cauchy reals, defined as a quotient.
We say that $f : \N \to \Q$ is a \emph{Cauchy sequence} if, for all $m$, $n : \N$ with $m < n$, we have $-1 < m \cdot (f_m - f_n)$ $< 1$.
Furthermore, $f$ and $g$ are equivalent (written $f \sim g$) if, for all $n : \N$, we have $-2 \leq$ $n \cdot(f_n - g_n) \leq 2$.
We use the notation $\R$ for the quotient of Cauchy sequences by~$\sim$.

A meta-theoretic result is that, without further assumptions, any definable function (i.e.\@ any closed term) of type $\R \to \two$ is constant
for reasons of
continuity~\cite{nuo:thesis}.
In particular, we cannot define a function $\mathit{isPositive}$ which checks whether a real number is positive.
However, we \emph{can} define a function $\mathit{isPositive} : \R \to
\pbot{\two}$ such that $\mathit{isPositive}(r)$ is equal---but not
judgmentally/definitionally equal---to $\eta(\true)$ if $r$ is
positive, $\eta(\false)$ if $r$ is negative, and $\bot$ if $r$ is
zero.

We define this function as follows:
Given a Cauchy sequence $f : \N \to \Q$, we construct a new sequence
$\overline f : \N \to \{-, ?, +\}$.
The idea is that $\overline f_n$ is an approximation which only takes $f_i$ with $i \leq n$ into account.
We start with $\overline f_0 \jdgeq \;?$.
If we have chosen $\overline f_{n-1}$ to be $-$, then we choose $\overline f_n$ to be $-$ as well, and analogously for $+$.
If we have chosen $\overline f_{n-1}$ to be $?$, we check whether $f_n \cdot n < -2$, in which case
we choose $\overline f_n$ to be $-$; if $f_n \cdot n > 2$,
we choose $\overline f_n$ to be $+$; otherwise, we choose
$\overline f_n$ to be $?$.
We can compose with the map $\{-,?,+\} \to \pbot A$ which is defined by $- \mapsto \eta(\false)$, $? \mapsto \bot$, and $+ \mapsto \eta(\true)$.
This defines a monotone sequence in $\pbot \two$, and we can form $\LUB \overline f : \pbot \two$ to answer whether $f$ represents a positive or negative number, or is zero. 
One can check that equivalent Cauchy sequences get mapped to equal values, hence we get $\mathit{isPositive} : \R \to \pbot \two$.

The strategy outlined above does not quite work for the reals defined
as a HIIT~\cite{HoTTbook} because, roughly speaking, in that setting
$f_n$ is a real number and a comparison such as $f_n \cdot n < -2$ is
undecidable.
Recently \citeauthor{gilbert_reals} has refined our approach and
defined a function
$\mathit{isPositive}$ for such reals~\cite{gilbert_reals}, using the
definition of the partiality monad presented in this text (with
insignificant differences).
\citeauthor{gilbert_reals}'s key observation is that comparisons
between real numbers and rational numbers are semidecidable, and
semidecidability is sufficient to define $\mathit{isPositive}$.

\subsection{Operational Semantics} \label{subsec:operational-semantics}

In previous work the second-named author has discussed how one can use
the delay monad to express operational semantics as definitional
interpreters~\cite{ICFP-2012-Danielsson}.
As a case study we have ported some parts of this work to the
partiality monad discussed in the present text:
definitional interpreters for a simple functional language and a
simple virtual machine, a type soundness proof, a compiler, and a
compiler correctness result.
Due to lack of space we do not include any details here, but refer
interested readers to the accompanying source code.

\section{Discussion and Further Work} \label{sec:discussions}

We have constructed a partiality monad without using countable choice.
This is only a first step in the development of a form of constructive
domain theory in type theory.
It remains to be seen whether it is possible to, for instance,
replicate the work of
\citeauthor{benton-kennedy-varming}~\cite{benton-kennedy-varming}, who
develop domain theory using the delay monad.

Consider the partial function
$\mathit{filter} : \prd{A : \Set} (A \to \two) \to A^\omega \to
A^\omega$
that filters out elements from a stream.
How should partial streams over $A$ be defined?
Defining them as $\nu X. \pbot{(A \times X)}$ seems inadequate, because
the ordering of $\pbot{(-)}$ is flat.
One approach would perhaps be to define this type by solving a domain
equation.
Instead of relying on the type-theoretic mechanism to define recursive
types, we can perhaps construct a suitable type of partial streams as
the colimit of an $\omega$-cocontinuous functor on the category of
$\omega$-cpos.
Preliminary investigations indicate that QIITs are useful in the
endeavour, for example in the definition of a lifting comonad on
$\omega$-cpos (as suggested by Paolo Capriotti).

Going in another direction, it might be worth investigating how much
topology can be done using the Sierpinski space, represented as
$\pbot \one$ in our setting.
A very similar question was discussed at the Special Year on Univalent
Foundations of Mathematics at the IAS in Princeton
(2012--2013).
Moreover, some observations have been presented by
\citeauthor{gilbert_reals}~\cite{gilbert_reals}, who used our
definition of $\pbot \one$ as presented in this paper (with minor
differences).

\subsubsection*{Acknowledgements.}
We thank Gershom Bazerman, Paolo Capriotti, Bernhard Reus, and Bas
Spitters for interesting discussions and pointers to related work, and
the anonymous reviewers for useful feedback.
The work presented in Sect.~\ref{subsec:characterisation-of-order} was
done in collaboration with Paolo Capriotti.

\bibliographystyle{plainnat}
\bibliography{partiality}
\end{document}